\let\newfloat\newfloat@ltx
\newtheorem{theorem}{Result}
\newcommand{\dyad}[1]{\vert #1 \rangle \langle #1 \vert}
\newcommand{\ben}[1]{{\color{black}#1}}
\newcommand{\expval}[1]{\langle #1 \rangle} 
\newcommand{\Paulichannel}[0]{\mathcal{E}_{\mathcal{P}}^{\alpha}}
\definecolor{teal}{RGB}{42, 157, 143}
\definecolor{yellow}{RGB}{233, 196, 106}
\definecolor{red}{RGB}{210, 66, 66}
\definecolor{lred}{RGB}{222,94,100}
\definecolor{lblue}{RGB}{179, 235, 242 }
\newcommand{\fref}[1]{\textcolor{blue}{\hyperref[#1]{Fig.$\,$\bfseries\ref{#1}}}}
\newcommand{\lemref}[1]{\textcolor{blue}{\hyperref[#1]{Lemma$\,$\bfseries\ref{#1}}}}
\newcommand{\thmref}[1]{\textcolor{blue}{\hyperref[#1]{Thm.$\,$\bfseries\ref{#1}}}}
\tikzset{
    every node/.style={font=\small},
    arrow/.style={-{Stealth}, thick},
    implies/.style={->, double equal sign distance, thick}
}
\newtcolorbox[auto counter]{mybox}[2][]{
    breakable = false,
    enhanced,
    sharp corners,
    colback=violet!3!white,
    colframe=violet!40!white,
    fonttitle=\bfseries,
    title={\centering \strut #2}, 
    enlarge bottom at break by=5mm,
    enlarge top at break by=5mm,
    overlay first={%
        \draw[black, line width=0.5mm](frame.south west)--(frame.south east);
        \node[anchor=north east] at (frame.south east) {continued on next page};
    },
    overlay middle={%
        \draw[black, line width=0.5mm](frame.south west)--(frame.south east);
        \draw[black, line width=0.5mm](frame.north west)--(frame.north east);
        \node[anchor=north east] at (frame.south east) {continued on next page};
        \node[anchor=south west] at (frame.north west) {continued from next page};
    },
    overlay last={%
        \draw[black, line width=0.5mm](frame.north west)--(frame.north east);
        \node[anchor=south west] at (frame.north west) {continued from last page};},
    #1
}
\renewcommand{\addcontentsline}[3]{}
\begin{document}

\preprint{APS/123-QED}

\title{An Algorithm for Estimating $\alpha$-Stabilizer Rényi Entropies via Purity}

\author{Benjamin Stratton}
\email{ben.stratton@bristol.ac.uk}
\affiliation{Quantum Engineering Centre for Doctoral Training, H. H. Wills Physics Laboratory and Department of Electrical \& Electronic Engineering, University of Bristol, BS8 1FD, UK}
\affiliation{H.H. Wills Physics Laboratory, University of Bristol, Tyndall Avenue, Bristol, BS8 1TL, UK}

\date{\today}

\begin{abstract}

Non-stabilizerness, or magic, is a resource for universal quantum computation in most fault-tolerant architectures; access to states with non-stabilizerness allows for non-classically simulable quantum computation to be performed. Quantifying this resource for unknown states is therefore essential to assessing their utility in quantum computation. The Stabilizer Rényi Entropies have emerged as a leading tool for achieving this, having already enabled one efficient algorithm for measuring non-stabilizerness. In addition, the Stabilizer Rényi Entropies have proven useful in developing connections between non-stabilizerness and other quantum phenomena. In this work, we introduce an alternative algorithm for measuring the Stabilizer Rényi Entropies of an unknown quantum state. Firstly, we show the existence of a state, produced from the action of a channel on $\alpha$ copies of some \ben{state $\rho$}, that encodes the $\alpha$-Stabilizer Rényi Entropy of $\rho$ into its purity. We detail several methods of applying this channel, and then, by employing existing purity-measuring algorithms, provide an algorithm for measuring the $\alpha$-Stabilizer Rényi Entropies for all integers $\alpha>1$. This algorithm is benchmarked for qubits and the resource requirements compared to other known algorithms. Finally, a non-stabilizerness/entanglement relationship is shown to exist in the algorithm, demonstrating a novel relationship between the two resources, \ben{before an instance of resource hiding is found}. 

\end{abstract}

\maketitle
\section{Introduction}

Not all quantum computations are hard. Given access to only stabilizer states, Clifford unitaries, and computational basis measurements, all implementable operations --- the so called \textit{stabilizer operations} --- can be simulated efficiently on a classical computer \cite{gottesman1997stabilizer, aaronson2004improved, Veitch_2014}. Thankfully for the field of quantum computation, these operations are also non-universal. However, the set of stabilizer operations are practically significant, as Clifford unitaries can typically be implemented easily (transversely) in fault-tolerant quantum computing frameworks \cite{campbell2010bound}, and stabilizer states are essential to most error correction protocols \cite{gottesman1997stabilizer}. Therefore, whilst stabilizer operations alone are not able to realise quantum advantage, they are operationally favourable. 

To perform non-classically-simulable quantum operations, and ultimately achieve universal quantum computation, one must be able to perform non-stabilizer operations. Fortunately, these can be achieved via stabilizer operations if they are supplemented with non-stabilizer states \cite{bravyi2005universal}; however, non-stabilizer states cannot be generated from stabilizer operations alone \cite{Veitch_2014}. Such a setting aligns with that of a quantum resource theory \cite{chitambar2019quantum}, which studies the limitations imposed by restricting operations to only an \textit{allowed} set. Those states that can be generated from just the allowed operations are termed the \textit{free} states, with all other states being considered resourceful. Here, stabilizer operations are regarded as the allowed operations and the stabilizer states as the free states \cite{Veitch_2014}. Access to non-free --- or resourceful --- non-stabilizer states then enables operations outside of the set of allowed operations to be implemented. In the case of qubits, if one has access to magic states \cite{bravyi2005universal}, they are able to perform universal quantum computation using stabilizer operations.

The non-stabilizerness (or magic) of a state is therefore considered to be a resource due to its ability to implement non-classically simulable quantum computation. Knowing the `amount' of non-stabilizerness a state possesses is then vital in understanding its usefulness in facilitating quantum advantage. In general, resources can be quantified via \textit{resource monotones} --- functions from states to the positive real-numbers that are non-increasing under allowed operations.

In the resource theory of non-stabilizerness, the $\alpha$ Stabilizer Rényi Entropies (SREs) \cite{PhysRevLett.128.050402} have emerged as a prominent resource monotone due to their computability \cite{passarelli2024nonstabilizerness, PhysRevB.107.035148, oliviero2022magic, chen2024magic, lami2023nonstabilizerness, tarabunga2024nonstabilizerness, haug2025efficientwitnessingtestingmagic}, and their ability to relate non-stabilizerness to other physical properties, such as information scrambling \cite{ahmadi2024quantifying} and entanglement \cite{gu2024pseudomagic, gu2024magic, PhysRevA.109.L040401}, to name a few. These are a family of resource monotones, with each value of \hbox{$0 < \alpha < \infty, \alpha \neq 1$} giving a separate monotone.  

The prominence of the SRE's as a measure of non-stabilizerness is furthered by the existence of algorithms to measure it for unknown quantum states \cite{PhysRevLett.132.240602, Oliviero2022}. These algorithms can outperform state tomography, potentially allowing for the non-stabilizerness of large scale systems to be benchmarked. 

In this work, we introduce a novel algorithm for measuring the SRE's of an unknown quantum state by developing a connection between the $\alpha$-SRE and purity. Specifically, we define a channel on $\alpha$ copies of some unknown pure state $\ket{\psi}$, such that the $\alpha$-SRE of $\ket{\psi}$ is encoded into the purity of the output state for all integers $\alpha > 1$. We give multiple methods for applying this channel and, by pairing with existing methods for measuring the purity from the literature, then detail an algorithm for measuring the $\alpha$-SREs for integer values of $\alpha$. Whilst the algorithm is shown to outperform quantum state tomography for even $\alpha$, it is not as resource efficient as other known algorithms \cite{PhysRevLett.132.240602}. However, it introduces the idea of estimating a resource monotone by encoding it into the purity of a state. This novel approach to the direct estimation of resource monotones could then be leveraged to develop new or alternative algorithms for other resource monotones. Additionally, an interesting non-stabilizerness/entanglement relationship is shown to be present in the algorithm.

\section{Framework}
Let $\mathcal{H}_{n}=(\mathbb{C}^{2})^{\otimes n}$ be the $n$-qubit Hilbert space of dimension $d=2^n$, and $\mathcal{P}_{n}$ be the set of $n$-qubit Pauli strings --- these are all the $n$-fold tensor product of the qubit Pauli operators $\mathbb{I}, X, Y, Z$, where $\vert \mathcal{P}_n \vert = 4^{n}$. The set of unitary operators that map Pauli strings to Pauli strings are the Clifford unitaries, $\mathcal{C}_n$. Formally, these are the normaliser of the Pauli group. The $n$-qubit pure stabilizer states are then those states that can be generated from $ \ket{0}^{\otimes n}$ and $\mathcal{C}_n$, \hbox{$STAB \coloneq \big\{ C \ket{0}^{\otimes n} : C \in \mathcal{C}_n \big\}$}. Equivalently, a state $\ket{\psi}$ is a stabilizer state if there exists a subgroup, \hbox{$S \subset \mathcal{P}_{n}$}, of $d$ mutually commuting Pauli-strings such that \hbox{$\bra{\psi} P \ket{\psi} = \pm 1 \forall~ P \in S$} and \hbox{$\bra{\psi} P \ket{\psi} = 0 ~\forall~P \in \mathcal{P}_n \setminus S$}. Mixed stabilizer states are then convex combinations of pure stabilizer states, \ben{although we will focus on pure states here, unless specified otherwise.} 
  
The Pauli strings, $\mathcal{P}_n = \{ P_j \}_{j=0}^{d^2-1}$, form an operator basis, and hence, all states $\rho \in D(\mathcal{H}_n)$, where $D(\mathcal{H}_n)$ is the set of density operators on $\mathcal{H}_n$, can be decomposed as 
\begin{equation}
\begin{aligned}
    \rho &= d^{-1} \sum_{j=0}^{d^2-1} \textrm{tr} \big[ \rho P_j \big] P_j.\\ \label{eq: pauli-decomposition}
\end{aligned}
\end{equation}
For all pure states, $\ket{\psi} \in \mathcal{H}_n$, a probability distribution --- termed the characteristic distribution --- can be defined using the coefficients of $\ket{\psi}$'s Pauli-decomposition,
\begin{equation}
\begin{aligned}
   \mathcal{D}(\ket{\psi}) &\coloneq \big\{ d^{-1} \bra{\psi} P_{j} \ket{\psi}^2 \big\}_{j=0}^{d^2-1}.
\end{aligned}
\end{equation}
This can be interpreted as a probability distribution due to 
\begin{equation}
    d^{-1} \bra{\psi} P_{j} \ket{\psi}^2 \geq 0~ ~ {\rm and} ~ ~ \sum_{j=0}^{d^2-1} d^{-1} \bra{\psi} P_{j} \ket{\psi}^2 = 1 ~ ~ \forall~\ket{\psi}.
\end{equation}
See Appendix~\ref{SupplementaryMaterialA} for more details. Given their definition, it can immediately be seen that all stabilizer states have a characteristic distribution that consists of $d$ elements that are $1/d$, and $d^2-d$ elements that are zero. 

The $\alpha$-SRE \cite{PhysRevLett.128.050402} are defined as
\begin{equation}
    \begin{split}
            M_{\alpha}(\ket{\psi}) &= (1-\alpha)^{-1} \ln \bigg( d^{-1} \sum_{j=0}^{d^2-1} \bra{\psi} P_j \ket{\psi}^{2\alpha} \bigg),\label{SimpleAlphaRényi StabiliserEntropy}
    \end{split}
\end{equation}
and are the $\alpha$-Rényi entropies of the characteristic distribution shifted by $-\ln{(d)}$. It can then be seen that if one knows 
\begin{equation}
    A_{\alpha}(\ket{\psi}) = d^{-1} \sum_{j=0}^{d^2-1} \bra{\psi} P_j \ket{\psi}^{2\alpha},
\end{equation}
then they can find $M_{\alpha}(\ket{\psi})$ using classical post-processing. We therefore focus on finding $A_{\alpha}(\ket{\psi})$. 

In resource-theoretic terms, the $\alpha$-SRE have been shown to be a good measure of non-stabilizerness: they are faithful, $M_\alpha (\ket{\psi}) = 0$ if and only if $\ket{\psi} \in STAB$, invariant under allowed operations, \hbox{$M_\alpha(C \ket{\psi}) = M_\alpha(\ket{\psi})~$if$ ~ C \in \mathcal{C}_n$}, and additive, \hbox{$M_{\alpha}(\ket{\psi} \otimes \ket{\omega}) = M_\alpha(\ket{\psi}) + M_\alpha(\ket{\omega})$} \cite{PhysRevLett.128.050402}. Moreover, the $\alpha$-SRE are monotonic under the allowed operations of the resource theory of non-stabilizerness for $\alpha \geq 2$ \cite{PhysRevA.110.L040403}.

\section{SRE From Purity}

The purity of a quantum state $\rho$ is defined as \ben{\hbox{$\mathrm{Pur}(\rho) \coloneq \textrm{tr}[\rho^2]$}} \cite{nielsen_chuang_2010}. Here, we show that there exists a quantum state whose purity encodes the quantity $A_{\alpha}(\ket{\psi})$. Initially, consider a register containing $\alpha$ copies of $\ket{\psi}$, $\ket{\psi}_{\Tilde{B}}^{\otimes \alpha}$ where $\Tilde{B} = B_1B_2 \ldots B_\alpha$ labels the spaces initially holding the copies of $\ket{\psi}$. The agent then applies the mixed unitary quantum channel $\Paulichannel(\cdot)$ to this register, defined as
\begin{equation}
    \begin{split}
            \Paulichannel \big( \dyad{\psi}^{\otimes \alpha}_{\Tilde{B}} \big) & \coloneq d^{-2} \sum_{j=0}^{d^2-1} P_j^{\otimes \alpha} \big( \dyad{\psi}^{\otimes \alpha}_{\Tilde{B}} \big) P_j^{\otimes \alpha} \\
            &= d^{-2} \sum_{j=0}^{d^2-1} \big( P_j \dyad{\psi} P_j \big)^{\otimes \alpha}_{\Tilde{B}}.
    \end{split}
\end{equation}
This channel applies each Pauli string in $\mathcal{P}_n$ with equal probability to all $\alpha$ copies of $\ket{\psi}$. From a direct calculation of the purity of this state it can be seen that 
\begin{equation}
    \begin{split}
        \textrm{tr} \big[ \Paulichannel \big( \dyad{\psi}^{\otimes \alpha} \big)^2 \big] &= d^{-4}\sum_{j,k=0}^{d^2-1} \textrm{tr}\big[ P_j \dyad{\psi} P_j P_k \dyad{\psi} P_k \big]^{\alpha} \\
        &= d^{-2} \sum^{d^2-1}_{l=0} \bra{\psi} P_l \ket{\psi}^{2 \alpha} \\
        &= d^{-1} A_{\alpha}(\ket{\psi}). \label{purityofTidleB}
    \end{split}
\end{equation}
See Appendix~\ref{SupplementaryMaterialB.1} for a full proof. Given that the purity of a state is a measurable property, via the swap test \cite{barenco1996stabilisationquantumcomputationssymmetrisation, PhysRevLett.87.167902} or randomised measurements \cite{Elben2023} for example, this opens up an avenue for directly measuring $A_{\alpha}(\ket{\psi})$ for integer values of $\alpha$. To achieve this, one must first be able to prepare the state $\Paulichannel(\dyad{\psi}^{\otimes \alpha})$.

The channel $\Paulichannel(\cdot)$ is reminiscent of the Pauli-twirling channel, 
\begin{equation}
    \mathcal{T}(\cdot) \ben{\coloneq} d^{-2} \sum_{j=0}^{d^2-1} P_j \big( \cdot \big) P_j,
\end{equation}
which is a state preparation channel of the maximally mixed state i.e., $\mathcal{T}(\cdot) = {\rm tr}[\cdot](\mathbb{I}/d) $. \ben{In fact, it can be seen that if $\alpha=1$, then $\Paulichannel(\cdot)$ {\em is}} the Pauli-twirling channel. 
Then, interestingly, for \ben{arbitrary $\alpha>1$, within each subspace of $\Tilde{B}$ that initially contained a copy of $\ket{\psi}$}, a Pauli-twirling channel is applied, 
\ben{\begin{equation}
    \begin{split}
        {\rm tr}_{\Tilde{B} \setminus B_i} \big[ \Paulichannel \big( &\dyad{\psi}^{\otimes \alpha}_{\Tilde{B}} \big) \big] = d^{-2} \sum_{j=0}^{d^2-1} {\rm tr}_{\Tilde{B} \setminus B_i} \big[ \big(P_j \dyad{\psi} P_j \big)^{\otimes \alpha}_{\Tilde{B}} \big] \\  
        &= d^{-2} \sum_{j=0}^{d^2-1} (\bra{\psi} P_j P_j \ket{\psi})^{(\alpha-1)} P_j \dyad{\psi}_{B_i} P_j \\
        &= \mathcal{T}(\dyad{\psi}_{B_i}) ~ \forall~i \in \{1, \alpha\}, \label{eq:local_pauli_twiling}
    \end{split}
\end{equation}}
\ben{where $\textrm{tr}_{\Tilde{B} \setminus B_i}[\cdot ]$ means trace over all $\Tilde{B}$ other than $B_i$, and we have used both the linearity of the partial trace along with the fact that \hbox{$\bra{\psi} P_j P_j \ket{\psi}=1~\forall~j$}, as all Pauli-strings square to the identity. Given the Pauli-twirling channel is a state preparation channel of the maximally mixed state, $\Paulichannel \big( \dyad{\psi}^{\otimes \alpha}_{\Tilde{B}} \big)$ is maximally mixed in each local subspace that initially contained a copy of $\ket{\psi}$ i.e., in each $B_i \in \Tilde{B}$. Hence, $\Paulichannel \big( \dyad{\psi}^{\otimes \alpha}_{\Tilde{B}} \big)$ looks locally free in each subspace that initially contained a copy of $\ket{\psi}$, as \hbox{$\mathbb{I}/d \in STAB$}}. 

An additional interesting feature of $\Paulichannel \big( \dyad{\psi}^{\otimes \alpha}_{\Tilde{B}} \big)$ is that if one instead traces out some subset of $\Tilde{B}$, $\Bar{B} \subset \{1,\alpha\}$, the remaining state is equivalent to applying the channel to $\alpha = \vert \Bar{B} \vert$ copies of $\ket{\psi}$. 

\section{An Algorithm for Estimating the SRE}

Here, an algorithm for measuring $A_{\alpha}(\ket{\psi})$, via its encoding into the purity of $\Paulichannel (\dyad{\psi}^{\otimes \alpha})$, for integers $\alpha>1$ is presented. It requires $\mathcal{O}\big( \alpha d^2 \epsilon^{-2} \big)$ copies of $\ket{\psi}$, where $\epsilon$ is the additive error, and uses $2n$ ancilla qubits.

Initially, $\Paulichannel(\dyad{\psi}^{\otimes \alpha})$ must be prepared. Firstly, a coherent method to prepare this state is presented, in which the channel $\Paulichannel(\cdot)$ is applied to a subspace containing $\alpha$ copies of $\ket{\psi}$ via unitary dynamics. Subsequently, incoherent methods for preparing the state are also discussed. At the culmination of the algorithm, the purity of the prepared state must be measured, which can be achieved by one of a variety of methods. Here, we focus on the swap test, which is first reviewed before the algorithm is detailed.  

\subsection{Swap Test}

Given two states $\sigma$ and $\rho$, the swap test \cite{barenco1996stabilisationquantumcomputationssymmetrisation, PhysRevLett.87.167902} is a quantum algorithm for which $\textrm{tr} [ \sigma \rho ]$ can be directly estimated to additive error $\tau$ using $\mathcal{O}(\tau^{-2})$ copies of the states. If $\sigma$ and $\rho$ are pure states, the swap test therefore allows the fidelity between the states to be measured. If, instead, two copies of a single state, $\rho$, are used, then the swap test can be used to measure the purity of the state. See Algorithm~\ref{alg:swap_test} for an overview of the swap test for measuring purity, where $cSWAP$ is a controlled swap that controls on the ancilla.   

\begin{algorithm}[h!]
\caption{Swap Test For Purity}\label{alg:swap_test}
 \hspace*{\algorithmicindent} \textbf{Input:}  $\rho \otimes \rho$ \\
 \hspace*{\algorithmicindent} \textbf{Output:} $\mathrm{Pur}(\rho)$ 
\begin{algorithmic}[1]
\State Append qubit ancilla, $\ket{0} \otimes \rho \otimes \rho$ 
\State Apply $H \otimes \mathbb{I} \otimes \mathbb{I}$ 
\State Apply $cSWAP$ 
\State Apply $H \otimes \mathbb{I} \otimes \mathbb{I}$ 
\State Measure $Z$ on the ancilla qubit. Output the expectation value 
\end{algorithmic}
\end{algorithm}

\subsection{Algorithm for Estimating $A_{\alpha}(\ket{\psi})$ via Purity}\label{section: algorithm details}

Firstly, a register of $2n$ ancilla qubits is appended to a register containing $\alpha$ copies of $\ket{\psi}$,
\begin{equation}
\begin{aligned}
    \ket{\psi_{\alpha}}_{A \Tilde{B}} &= \ket{0}^{\otimes 2n}_{A} \otimes \ket{\psi}^{\otimes \alpha}_{\Tilde{B}}. \\
\end{aligned}
\end{equation}
where the subscript $A$ represents the ancilla system. Hadamard gates are then applied to each ancilla qubit to put them in an equal superposition of all the computational basis states, denoted by the set $\{ \ket{j} \}_{j=0}^{d^2-1}$. Next, the unitary 
\begin{equation}
    cU_{\mathcal{P}} \coloneq \sum_{j=0}^{d^2-1} \dyad{j}_{A} \otimes P_{j, \Tilde{B}}^{\otimes \alpha},
\end{equation}
is applied. This acts each of the $n$-fold Pauli strings on each of the $\alpha$ copies of $\ket{\psi}$ conditioned on the state in the ancillary system. Now, tracing out the ancillary space gives 
\begin{equation}
\begin{aligned}
    \psi'_{\alpha, \Tilde{B}} &= \textrm{tr}_{A} \big[ cU_{\mathcal{P}} \big(H^{\otimes 2n} \otimes \mathbb{I} \big) \ket{\psi_{\alpha}}_{A\Tilde{B}} \big] \\
    &= \Paulichannel \big( \dyad{\psi}^{\otimes \alpha} \big). \label{eq: partial trace over coherent state}
\end{aligned}
\end{equation}
Hence, $d^{-1}A_{\alpha}(\ket{\psi})$ can be output by measuring the purity of the $\Tilde{B}$ subspace. This can be achieved via the swap test, to an additive error $\tau$, using $\mathcal{O}(\alpha \tau^{-2})$ copies of $\psi'_{\alpha, \Tilde{B}}$. To measure $A_{\alpha}(\ket{\psi})$ to additive error $\epsilon$ therefore requires $\mathcal{O}(\alpha d^2 \epsilon^{-2})$ copies of $\ket{\psi}$. If a specific failure probability of $\delta$ is also specified, this can be stated more specifically as $\lceil \alpha d^2 \epsilon^{-2} \delta^{-1} \rceil$ copies of $\ket{\psi}$. See Appendix~\ref{SupplementaryMaterialC} for more details on resource requirements. An overview of the algorithm is detailed in Algorithm~\ref{alg:fullAlgorithm} and in Fig.~\ref{cicruitDiagram} a circuit diagram of the state preparation section of the algorithm for $\alpha=2$ and qubit states is presented. 

Performing the swap test requires $2$ copies of $\psi'_{\alpha, \Tilde{B}}$. Given each $\psi'_{\alpha, \Tilde{B}}$ uses $\alpha$ copies of $\ket{\psi}$, measuring $A_{\alpha}(\ket{\psi})$ using the swap test therefore requires $2\alpha$ copies of $\ket{\psi}$ to simultaneous exists, i.e, a $2\alpha$ copy measurements must be made. A quantum computer of at least $2n(\alpha+1) + 1$ qubits is therefore necessary when including ancilla. Alternative purity estimation algorithms, such as randomised measurement, could reduce this to $\alpha$ copy measurements on a $n(\alpha+2)$ qubit device by measuring the purity whilst acting on a single copy of $\psi'_{\alpha, \Tilde{B}}$ \cite{Elben2023} .  

It is noted that $cU_{\mathcal{P}}$ has a depth that is exponential in the system size and contains large controlled operations, making it practically challenging to implement. Although, each of the controlled Pauli-string gates, \hbox{$\dyad{j} \otimes P_j^{\otimes \alpha}$}, in $cU_\mathcal{P}$ can be decomposed into a sequence of controlled single-Pauli gates that builds the controlled Pauli-string on each of the $\alpha$ copies of $\ket{\psi}$ one element at a time. However, whilst this simplifies the control gates, it comes at the expenses of even greater depth. Below, alternative state preparation methods that do not make use of $cU_{\mathcal{P}}$ are detailed.  
\begin{figure} 
\begin{center} \label{cicruitDiagram}
\begin{quantikz}
    \lstick{$\ket{\psi}$} &      & \gate{X} \gategroup[4,steps=3,style={dashed,rounded corners, fill=blue!5, inner xsep=2pt},background,label style={label position=below,anchor=north,yshift=-0.2cm}]{{\sc
$cU_{\mathcal{P}}$}}         &  \gate{Y}                 &  \gate{Z}             &  \\
    \lstick{$\ket{\psi}$} &      & \gate{X}\wire[u]{q}&  \gate{Y}\wire[u]{q}      &  \gate{Z}\wire[u]{q}  &  \\ 
    \lstick{$\ket{0}$}  & \gate{H}& \ctrl{1}\wire[u]{q}&  \ctrl[open]{1}\wire[u]{q}&  \ctrl{1}\wire[u]{q}  & \ground{}          \\ 
   \lstick{$\ket{0}$} & \gate{H}& \ctrl{0}           &  \ctrl{0}                 &  \ctrl[open]{0}       &  \ground{}   
\end{quantikz}
\caption{A circuit diagram for the coherent algorithm for measuring $A_{2}(\ket{\psi})$ when $\ket{\psi}$ is a single qubit. Initially, $2n$ ancillary qubits are appended and placed in an even superposition of the computational basis states. The gate $cU_{\mathcal{P}}$ is then applied which encodes $A_{2}(\ket{\psi})$ into both the purity of the resistor holding the $\alpha$ copies of $\ket{\psi}$ and the ancilla. One of a numerous number of purity measuring algorithms can then be used to extract $A_{2}(\ket{\psi})$.}
\end{center}
\end{figure}
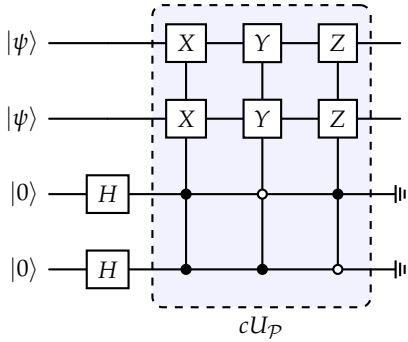

\subsection{Benchmarking}

The algorithm can be benchmarked for qubits using the state 
\begin{equation}
\ket{\psi_{\theta}} = \frac{1}{\sqrt{2}} \bigg( \ket{0} + e^{i \theta} \ket{1} \bigg),    
\end{equation}
given that $A_{\alpha}(\ket{\psi_\theta})$ can be simply written as a function of $\theta$ as
\begin{equation}
    A_{\alpha}(\ket{\psi_\theta}) = \frac{1}{2} \bigg(  1 + \cos(\theta)^{2\alpha} + \sin(\theta)^{2 \alpha} \bigg).
\end{equation}
These theoretical values are then compared to the output of a simulation performing Algorithm~\ref{alg:fullAlgorithm} for \hbox{$\alpha \in \{2,3,5,7\}$}. The results are plotted in Fig.~\ref{fig:varying_alpha} for additive error $\epsilon=0.05$ and failure probability $\delta= 0.1$ (see \cite{data} for the code and data). Error bars of $\epsilon$ are included and it can be seen that all points fall within $\epsilon$ of the theoretical value. Moreover, it can be seen that in  Fig.~\ref{fig:varying_alpha} that the accuracy of the algorithm appears to have no dependence on $\alpha$.  

\begin{figure}
    \centering
    \includegraphics[scale=0.185]{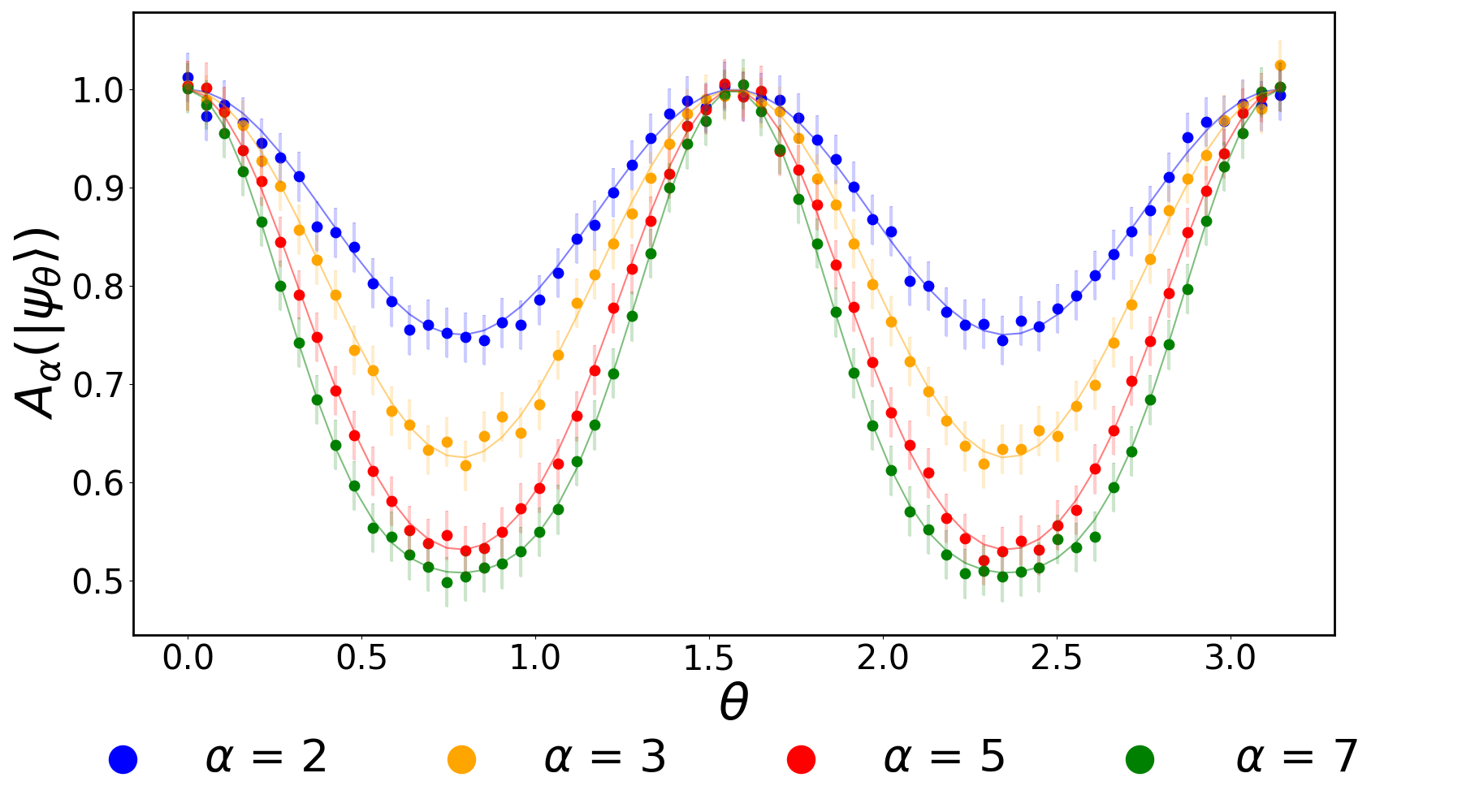}
    \caption{$A_{\alpha}(\ket{\psi_\theta})$ for $\alpha = \{2,3,5,7\}$ found using a circuit simulation (dots) of Algorithm~\ref{alg:fullAlgorithm} using $\lceil \alpha d^2 \epsilon^{-2} \delta^{-1} \rceil$ copies of $\ket{\psi_\theta}$, where $\epsilon=0.05$ and $\delta=0.1$, and the theoretical values (lines). Error bars of size $\epsilon$ are plotted.}
    \label{fig:varying_alpha}
\end{figure}


\subsection{Incoherent State Preparation Methods}
The channel $ \Paulichannel(\dyad{\psi}^{\otimes \alpha})$ is some mixed unitary channel; any mixed unitary channel can be performed by applying some unitary from a set with some corresponding probability, and then 'forgetting' which unitary was applied. The forgetting stage can be achieved by, for example, sending the state to another party without disclosing which unitary was applied, or by erasing any memory that contains information about the applied unitary. In the case of $ \Paulichannel(\dyad{\psi}^{\otimes \alpha})$, an $n$-qubit Pauli-string is chosen with probability $d^{-2}$ and then applied to all $\alpha$ copies of $\ket{\psi}$. Any number of methods for forgetting which Pauli-string was applied can then be performed, which then leaves $ \Paulichannel(\dyad{\psi}^{\otimes \alpha})$. This method therefore prepares the state $ \Paulichannel(\dyad{\psi}^{\otimes \alpha})$ without the need to use the expensive $cU_{\mathcal{P}}$ gate. 

\begin{algorithm}[h!]
\caption{Estimating $A_{\alpha}(\ket{\psi})$ via Purity}\label{alg:fullAlgorithm}
 \hspace*{\algorithmicindent} \textbf{Input:}  $\ket{\psi}^{\otimes \alpha}$ \\
 \hspace*{\algorithmicindent} \textbf{Output:} $A_{\alpha}(\ket{\psi})$ \\
\begin{algorithmic}[1]
\State Append $2n$ ancillary qubits, $\psi_\alpha = \ket{0}^{\otimes 2n} \otimes \ket{\psi}^{\otimes \alpha}$
\State Apply $H^{\otimes 2n} \otimes \mathbb{I}$ 
\State Apply $cU_{\mathcal{P}} = \sum_{i=0}^{d^2-1} \dyad{i} \otimes P_{i}^{\otimes \alpha}$
\State \ben{Measure $\mathrm{Pur}(\psi'_{\alpha, \Tilde{B}})$}, where 
\begin{equation}
    \begin{aligned}
        \psi'_{\alpha, \Tilde{B}} &= \textrm{tr}_{A} \big[ (cU_{\mathcal{P}})(H^{\otimes 2n} \otimes \mathbb{I}) \psi_{\alpha} \big], \\
    \end{aligned}
\end{equation} 
\State Output \ben{$d ~ \mathrm{Pur}(\psi'_{\alpha, \Tilde{B}})$}
\end{algorithmic}
\end{algorithm}   

\subsection{Comparisons to other algorithms}

There exists multiple methods one could implement to estimate $A_{\alpha}(\ket{\psi})$. Here, we will review and compare our algorithm to the leading algorithm, quantum state tomography (QST) and direct estimation. This is done via both the required number of copies of $\ket{\psi}$ \ben{(the copy complexity)} and the copy of the required measurement. Since all of the state preparation methods detailed above require an equal number of copies of $\ket{\psi}$, and all require only stabilizer operations to implement (which are considered to be free), the specific method used in preparation is unimportant when considering these two measures.  

We firstly compare to estimating $A_{\alpha}(\ket{\psi})$ via QST. To achieve this, it is first noted that $A_{\alpha}(\ket{\psi})$ can be written as the expectation value of an observable $\Gamma_{\alpha}^{\otimes n}$ acting on $2 \alpha$ copies of $\ket{\psi}$, 
\begin{equation}
    A_{\alpha}(\ket{\psi}) = \bra{\psi}^{\otimes 2\alpha} \Gamma_{\alpha}^{\otimes n} \ket{\psi}^{\otimes 2\alpha}, \label{replicaTrick}
\end{equation}
where $\Gamma_{\alpha} = 2^{-1} \sum^{3}_{i=0} (Q_i)^{\otimes 2 \alpha}$ where $Q_i$ are the single qubit Pauli-operators, $\mathbb{I}, X, Y, Z$ \cite{PhysRevB.107.035148}. Hence, one does not need to perform full state tomography on $\ket{\psi}$ to estimate $A_{\alpha}(\ket{\psi})$, they can instead aim to just estimate the expectation value of $\Gamma_{\alpha}^{\otimes n}$ directly. This has a sample complexity of $\Theta(d \vert \vert  \Gamma_{\alpha}^{\otimes n} \vert \vert_{\infty} \epsilon^{-2})$ when estimating $A_{\alpha}(\ket{\psi})$ to additive error $\epsilon$. Given $ \vert \vert \Gamma_{\alpha}^{\otimes n} \vert \vert_{\infty} = d$ if $\alpha$ is even and $1$ if it is odd \cite{PhysRevLett.132.240602}, the resource requirements of estimating $A_{\alpha}(\ket{\psi})$ using QST dependent on the parity of the $\alpha$ being considered. Specifically, for odd $\alpha$, one requires $\mathcal{O}(d \epsilon^{-2})$ copies of $\ket{\psi}$, whilst for even $\alpha$, $\mathcal{O}(d^3 \epsilon^{-2})$ copies are required. Hence, it can be seen that the number of copies of $\ket{\psi}$ needed for our algorithm scales worse than QST for odd $\alpha$, and better for even $\alpha$. See Appendix~\ref{SupplementaryMaterialC} for more details. 

Next, we consider measuring $A_{\alpha}(\ket{\psi})$ using direct estimation. For instance, one could perform independent calculations of the expectation values for each of the $d^2$ Pauli-strings in $\Gamma_{\alpha}^{\otimes n}$ with respect to the state $\ket{\psi}^{\otimes 2 \alpha}$. To estimate $A_{\alpha}(\ket{\psi})$ to additive error $\epsilon$, the expectation value of each individual Pauli-string in $\Gamma_{\alpha}^{\otimes n}$ would need to be measured to an additive error $\epsilon$ also. To do this, $\mathcal{O}(\alpha \epsilon^{-2})$ copies of $\ket{\psi}$ are needed for each string, meaning a total of $\mathcal{O}( \alpha d^2 \epsilon^{-2})$ copies of $\ket{\psi}$ are needed. Our algorithm therefore has the same resource requirements as performing direct estimation on $\Gamma_{\alpha}^{\otimes n}$. In addition, performing direct estimation on $\Gamma_{\alpha}^{\otimes n}$ would require $2 \alpha$ copy measurements to be made, which also matches our algorithm. One could perform direct estimation using single copy measurements by estimating the expectation values $\bra{\psi} P_j \ket{\psi}$ for all $P_i \in \mathcal{P}_n$, and then calculating $A_{\alpha}(\ket{\psi})$ via classical post-processing. However, this would require $\mathcal{O}(\alpha^2 d^4 \epsilon^{-2})$ copies of $\ket{\psi}$ to get an additive error of $\epsilon$, meaning an exponential increase in the resource requirements. See Appendix~\ref{SupplementaryMaterialC} for all details.

Lastly we consider the algorithm presented in \cite{PhysRevLett.132.240602} which \ben{also utilises Eq.~\eqref{replicaTrick}}. This algorithm is able to estimate $A_\alpha(\ket{\psi})$ to additive error $\epsilon$ using $\mathcal{O}(\alpha \epsilon^{-2})$ copies of $\ket{\psi}$ for odd $\alpha$, and \ben{$\mathcal{O}(\alpha d^2 \epsilon^{-2})$} for even $\alpha$. The algorithm measures two copies of $\ket{\psi}$ in the Bell basis, and then performs classical post-processing to extract $A_{\alpha}(\ket{\psi})$ from the measurement outcomes. 

\ben{For even $\alpha$, this algorithm scales better in terms of the number of copies of $\ket{\psi}$ needed than our algorithm. Moreover, the algorithm in \cite{PhysRevLett.132.240602} requires only two copy measurement, compared to the $2 \alpha$ copy measurements needed in our algorithm. Given this algorithm is therefore superior to ours in both these metrics, it represents the current optimal algorithm to measure the $\alpha$-SRE \ben{of an unknown state} for even $\alpha$. However, our algorithm does offer a benefit in terms of the number of qubits that must be measured for estimating the $\alpha$-SRE. If using the SWAP test to measure the purity at the culmination of our algorithm, only a single qubit needs to be measured. In the algorithm of \cite{PhysRevLett.132.240602}, $2n$ qubits must be measured. Therefore, in scenarios where measurement noise is a limiting factor, the reduced measurement overhead of our algorithm may make it preferable despite requiring more copies of $\ket{\psi}$. 

For odd $\alpha$, our algorithm scales equally to that of \cite{PhysRevLett.132.240602} in terms of number of copies required. In fact, it has been shown that measuring $A_{\alpha}(\ket{\psi})$ for odd $\alpha$ requires $\Omega(d^2 \epsilon^{-2})$ copies \cite{bittel2025operationalinterpretationstabilizerentropy}, meaning both algorithms are as efficient as possible for odd $\alpha$. Moreover, both algorithms demonstrate that this bound is tight.}

\ben{Lastly, we note that the depth of the circuit for performing our algorithm is greater than for the algorithm in \cite{PhysRevLett.132.240602}. Namely, their algorithm is of a fixed depth of two, with the circuit consisting of only Hadamard gates and CNOTs, used to turn a measurement in the computational basis into a measurement in the Bell basis. Ours, however, has a depth of at least $4^n$, making it more practically difficult to implement. Despite the depth, our circuit does contain only Clifford unitaries, meaning it is at least an allowed operation within the resource theory of non-stabilizerness.}

\ben{\subsection{Mixed states}

The $\alpha$-SRE is extend to mixed states \cite{PhysRevLett.128.050402, PhysRevA.107.022429} by taking the $\alpha$-Rényi entropies of the probability distribution
\begin{equation}
\begin{aligned}
   \mathcal{D}(\rho) &\coloneq \big\{ d^{-1} \mathrm{Pur}(\rho)^{-1} \textrm{tr}\big[ P_{j} \rho \big]^2 \big\}_{j=0}^{d^2-1},
\end{aligned}
\end{equation}
made from the Pauli-decomposition of $\rho$ given in Eq.~\ref{eq: pauli-decomposition}. Hence, the mixed $\alpha$-SRE is 
\begin{equation}
    \begin{split}
            M^{\mathrm{mix}}_{\alpha}(\rho) &= (1-\alpha)^{-1} \ln \bigg( d^{-1} \mathrm{Pur}(\rho)^{-1} \sum_{j=0}^{d^2-1} \textrm{tr}\big[ P_j \rho \big]^{2\alpha} \bigg). \label{SimpleAlphaRényi StabiliserEntropy}
    \end{split}
\end{equation}
As before, it can then be seen that if one knows 
\begin{equation}
    A^{\mathrm{mix}}_{\alpha}(\rho) = d^{-1} \mathrm{Pur}(\rho)^{-1}~\sum_{j=0}^{d^2-1} \textrm{tr}\big[ P_j \rho \big]^{2\alpha},
\end{equation}
then $M^{\mathrm{mix}}_{\alpha}(\rho)$ can be calculated via post-processing. However, if Algorithm~\ref{alg:fullAlgorithm} is performed on $\rho$, then the output is not $A^{\mathrm{mix}}_{\alpha}(\rho)$. This is specifically due to the use of \hbox{$\textrm{tr}\big[\rho^2\big] = \textrm{tr}\big[\rho \big]^2$} in the proof of Algorithm~\ref{alg:fullAlgorithm}, which holds if and only if $\rho$ is pure. Hence, we here note that the domain of Algorithm~\ref{alg:fullAlgorithm} is limited to pure states, but leave it to future work to determine if performing Algorithm~\ref{alg:fullAlgorithm} on $\rho$ could be used to place bounds on $ A^{\mathrm{mix}}_{\alpha}(\rho)$.} 

\section{Non-stabilizerness-Entanglement Correlation} 

Entanglement is considered to be a key ingredient of non-classicality. Non-stabilizerness is too, given that stabilizer operations are classically simulable. However, these two resources are known to be inequivalent as one can generate large amounts of entanglement via stabilizer operations, and large amounts of non-stabilizerness via entanglement non-generating operations. It could therefore be the case that these resources capture distinct notions of non-classicality. 

To explore this, it is essential to understand the interplay between these resources \cite{fattal2004entanglementstabilizerformalism, gu2024magic}. In this direction, we here demonstrate an example of a relationship between non-stabilizerness and entanglement in our algorithm, as measured by the $\alpha$-SRE and the Renyi entropy of entanglement respectively. The $\beta$-Renyi entropy of entanglement \cite{RevModPhys.81.865} of a bipartite pure state $\ket{\omega}_{AB}$, is defined as
\begin{equation}
    E_{\beta}(\ket{\omega}_{AB}) = (1-\beta)^{-1} \ln{\bigg( \sum_i \lambda_i^\beta \bigg)},
\end{equation}
where $\lambda_i$ are the Schmidt coefficients of $\omega_{AB}$. As with the SREs for non-stabilizerness, the Renyi entropies of entanglement are a good measure of entanglement within the resource theory of entanglement i.e., it is faithful and non-increasing under allowed operations (LOCC). 

\begin{theorem}\label{result:magic_entnaglment_trade_off}
    Let $\ket{\psi}$ be an $n$-qubit state, such that $d=2^n$, and $\alpha$ an integer where $\alpha \geq 1$. Then 
    \begin{equation}
        (1-\alpha) M_{\alpha}(\ket{\psi}) + E_{2} \big( \ket{\psi'_{\alpha}}_{A\Tilde{B}} \big) = \ln{(d)},
    \end{equation}
    where
    \begin{equation}
        \ket{\psi'_{\alpha}}_{A\Tilde{B}} = cU_{\mathcal{P}} \big(H^{\otimes 2n} \otimes \mathbb{I} \big) \ket{\psi_{\alpha}}_{A\Tilde{B}},
    \end{equation}
\end{theorem}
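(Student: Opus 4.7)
The plan is to recognize the claimed identity as the combination of two facts already established: the purity of $\psi'_{\alpha,\tilde{B}}$ encodes $A_\alpha(\ket{\psi})$ (Eq.~\eqref{purityofTidleB}), and for a bipartite pure state the $2$-Rényi entanglement entropy is simply minus the logarithm of the reduced-state purity. The whole proof should then reduce to an algebraic rearrangement, so I would not expect any serious obstacle.

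First I would write out $\ket{\psi'_\alpha}_{A\tilde{B}}$ explicitly. After applying $H^{\otimes 2n}\otimes \mathbb{I}$ to $\ket{\psi_\alpha}_{A\tilde{B}}$ one obtains $d^{-1}\sum_{j=0}^{d^2-1}\ket{j}_A\otimes\ket{\psi}^{\otimes\alpha}_{\tilde{B}}$, and after the controlled unitary $cU_\mathcal{P}$ this becomes
\begin{equation}
\ket{\psi'_\alpha}_{A\tilde{B}} = \frac{1}{d}\sum_{j=0}^{d^2-1}\ket{j}_A \otimes P_j^{\otimes\alpha}\ket{\psi}^{\otimes\alpha}_{\tilde{B}}.
\end{equation}
Tracing over $A$ (the computational basis vectors $\ket{j}_A$ are orthonormal) gives exactly $\psi'_{\alpha,\tilde{B}} = \mathcal{E}^{\alpha}_{\mathcal{P}}(\dyad{\psi}^{\otimes\alpha})$.

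Next I would invoke the identity already proved in Eq.~\eqref{purityofTidleB}, namely
\begin{equation}
\mathrm{tr}\bigl[\psi'^{\,2}_{\alpha,\tilde{B}}\bigr] = d^{-1} A_\alpha(\ket{\psi}).
\end{equation}
Because $\ket{\psi'_\alpha}_{A\tilde{B}}$ is pure, its Schmidt coefficients on the $A|\tilde{B}$ cut are the eigenvalues $\{\lambda_i\}$ of $\psi'_{\alpha,\tilde{B}}$, so $\sum_i \lambda_i^2 = \mathrm{tr}[\psi'^{\,2}_{\alpha,\tilde{B}}]$ and hence
\begin{equation}
E_2\bigl(\ket{\psi'_\alpha}_{A\tilde{B}}\bigr) = -\ln\mathrm{tr}\bigl[\psi'^{\,2}_{\alpha,\tilde{B}}\bigr] = \ln(d) - \ln A_\alpha(\ket{\psi}).
\end{equation}

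Finally, from the definition of the SRE, $\ln A_\alpha(\ket{\psi}) = (1-\alpha)M_\alpha(\ket{\psi})$, so substituting and rearranging yields the claimed equality $(1-\alpha)M_\alpha(\ket{\psi}) + E_2(\ket{\psi'_\alpha}_{A\tilde{B}}) = \ln(d)$. The only step that requires care is the first, making sure that after the controlled Pauli-string unitary the $A$-register states remain orthonormal so that the partial trace genuinely returns $\mathcal{E}^{\alpha}_{\mathcal{P}}(\dyad{\psi}^{\otimes\alpha})$; this is immediate since $cU_\mathcal{P}$ acts diagonally in the $\{\ket{j}_A\}$ basis. Everything else is a direct chaining of the definitions.
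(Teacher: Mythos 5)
Your proposal is correct and follows essentially the same route as the paper's proof: identify the reduced state on $\Tilde{B}$ with $\Paulichannel(\dyad{\psi}^{\otimes\alpha})$, use Eq.~\eqref{purityofTidleB} for its purity, equate that purity with the sum of squared Schmidt coefficients of the pure state $\ket{\psi'_{\alpha}}_{A\Tilde{B}}$, and finish with $\ln A_{\alpha}(\ket{\psi}) = (1-\alpha)M_{\alpha}(\ket{\psi})$. No gaps; your extra check that $cU_{\mathcal{P}}$ is diagonal in the ancilla basis so the partial trace gives the channel output is exactly the point the paper establishes separately in the main text.
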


\begin{proof}
    All bipartite pure states have a Schmidt decomposition and hence 
    \begin{equation}
         \ket{\psi'_{\alpha}}_{A\Tilde{B}} = \sum_i \sqrt{\lambda_i} \ket{i}_A\ket{i}_{\Tilde{B}},
    \end{equation}
    where the $A \vert \Tilde{B}$ bipartition has been considered. The reduced state of $ \ket{\psi'_{\alpha}}_{A\Tilde{B}}$ in the $\Tilde{B}$ subspace,
    \begin{equation}
        \psi'_{\alpha, \Tilde{B}} = \textrm{tr}_{\Tilde{B}}\big[\dyad{\psi'_{\alpha}}_{A\Tilde{B}}\big],
    \end{equation}
    can also be expressed in terms of these Schmidt coefficients as
    \begin{equation}
        \psi'_{\alpha, \Tilde{B}} = \sum_{i} \lambda_i \dyad{i}_{\Tilde{B}}.
    \end{equation}
    The purity of $\psi'_{\alpha,\Tilde{B}}$ in terms of these Schmidt coefficients is then 
    \begin{equation}
        \begin{split}
            \textrm{tr} \big[\big(\psi'_{\alpha, \Tilde{B}}\big)^2 \big] &= \sum_{i} \lambda_i^2 \\
            &= d^{-1} A_{\alpha}(\ket{\psi}),
        \end{split}
    \end{equation}
    where the second line comes from Eq.~(\ref{purityofTidleB}). It can then be noted that 
    \begin{equation}
        \begin{split}
            E_{2}\big(\ket{\psi'_{\alpha}}_{A\Tilde{B}}\big) &= - \ln \bigg( \sum_i \lambda_i^2 \bigg) \\
            &= - \ln \big( d^{-1} A_{\alpha}(\ket{\psi}) \big) \\
            &= \ln(d) - (1-\alpha)M_{\alpha}(\ket{\psi}),
        \end{split}
    \end{equation}
    completing the proof. 
\end{proof}

As $\alpha \geq 1$, it can be seen that the more non-stabilizerness in the initial state, and hence the greater $M_{\alpha}(\ket{\psi})$, the more entanglement $cU_{\mathcal{P}}$ generates across the $A \vert \Tilde{B}$ bipartition. Moreover, as the reduced density operators of bipartite pure states are always the same, it can be seen that $A_{\alpha}(\ket{\psi})$ is also encoded into the purity of the other marginal,
\begin{equation}
\begin{aligned}
    \Tilde{\psi}_{\alpha, A} &= \textrm{tr}_{\Tilde{B}} \big[ cU_{\mathcal{P}} \big(H^{\otimes 2n} \otimes \mathbb{I} \big) \psi_{\alpha, A\Tilde{B}} \big] \\
    &= d^{-2} ~ \sum_{i,j=0}^{d^2-1}\textrm{tr} \big[ P_j P_i \dyad{\psi} \big]^{\alpha} \ket{i}\bra{j}_A,
\end{aligned}
\end{equation}
as it has the same coefficients of its reduced density operator. This can also be seen from a direct calculation, see Appendix~\ref{SupplementaryMaterialB.2} for a full derivation. This means that one can measure the purity of either the $\Tilde{B}$ or $A$ subspace of $\ket{\psi'_{\alpha}}_{A\Tilde{B}}$ to estimate $A_{\alpha}(\ket{\psi})$. Moreover, this demonstrates that $\Paulichannel(\dyad{\psi}^{\otimes \alpha})$ is not the unique state in which $A_{\alpha}(\ket{\psi})$ is encoded into the purity. It is left for future work to determine the full set of states for which $A_{\alpha}(\ket{\psi})$ is encoded into the purity. If a state could be found such that $A_{\alpha}(\ket{\psi})$ is encoded with a coefficient less than $d^{-1}$, it could lead to efficiency improvements over our algorithm. 

\ben{\section{Resource Hiding}

In this work, the state $\ket{\psi'_{\alpha}}_{A\Tilde{B}}$ has thus far been shown to possess two interesting features: measuring the purity of either the $A$ or $\Tilde{B}$ subsystem allows the $\alpha$-SRE to be estimated, and the entanglement across the $A\Tilde{B}$ bipartition is dependent on the $\alpha$-SRE of $\ket{\psi}$. Here, we report one final interesting feature of this state.  

It was shown in Eq.~\eqref{eq:local_pauli_twiling} that 
\begin{equation}
    {\rm tr}_{\Tilde{B} \setminus B_i} \big[ \Paulichannel \big( \dyad{\psi}^{\otimes \alpha}_{\Tilde{B}} \big) \big] = \mathbb{I}/d ~ \forall~i \in \{1, \alpha\}.
\end{equation}
As a result of both this and Eq.~\eqref{eq: partial trace over coherent state}, it can be concluded that 
\begin{equation}
    \textrm{tr}_{A\Tilde{B} \setminus B_i }\big[\dyad{\psi'_{\alpha}}_{A\Tilde{B}}\big] = \mathbb{I}/d ~ \forall~i \in \{1, \alpha\}, 
\end{equation}}
\ben{where $\textrm{tr}_{A \Tilde{B} \setminus B_i}\big[\cdot\big]$ means trace over all $A$ and $\Tilde{B}$ other than $B_i$. Hence, in each subspace of $\Tilde{B}$ in $\ket{\psi'_{\alpha}}_{A\Tilde{B}}$, the local state is maximally mixed. Therefore, in each subspace of $\Tilde{B}$ in $\ket{\psi'_{\alpha}}_{A\Tilde{B}}$ the state is a free state, as $\mathbb{I}/d \in \mathrm{STAB}$. Moreover, as $cU_{\mathcal{P}}^\dagger=cU_{\mathcal{P}}$, if $cU_{\mathcal{P}}$ is applied to $\ket{\psi'_{\alpha}}_{A\Tilde{B}}$, each subsystem within $\Tilde{B}$ returns to $\ket{\psi}$, with the local non-stabilizerness therefore restored. We note, as $cU_{\mathcal{P}}$ is Clifford, it is resource non-generating, meaning no non-stabilizerness is created under its application. Therefore, within $\ket{\psi'_{\alpha}}_{A\Tilde{B}}$, all the non-stabilizerness (resource) of the $\alpha$ copies of $\ket{\psi}$ has been reversible stored in the global correlations of the state, with none remaining locally in the subsystems of $\Tilde{B}$ that originally contained the copies of $\ket{\psi}$. }

\ben{This behaviour can be viewed as an instance of {\em resource hiding} \cite{985948}, with the non-stabilizerness of the $\alpha$ input states having become locally inaccessible in each subsystem of $\Tilde{B}$. Yet, the total non-stabilizerness remains unchanged, with it instead being stored in the global correlations. The resource can only be recovered --- and hence only used --- through the collaboration of all involved subsystems i.e., via a global operation. }

\ben{Interestingly, there are other states that also exhibit resource hiding under the action of $cU_{\mathcal{P}}$. For example, when $2n$ ancillary qubits in the maximally mixed state are appended to $\alpha$ copies of $\ket{\psi}$ i.e., $\mathbb{I}/(4^n) \otimes \dyad{\psi}^{\otimes \alpha}$ if $\ket{\psi}$ contains $n$ qubits. Moreover, we note that this is also an instance of resource hiding for the resource theory of purity, as this framework also has the maximally mixed state as a free state, and has all unitaries as allowed operations \cite{Streltsov_2018, resourceTheoryOfInformationalNonEquilibrium}. We leave further characterisations of this notion of resource hiding to future work.}

\section{Discussion}

In this work, we have demonstrated that the $\alpha$-SRE of a state $\ket{\psi}$, for positive integer values of $\alpha$, is encoded into the purity of the state $\mathcal{E}^{\alpha}_{\mathcal{P}}(\dyad{\psi}^{\otimes \alpha})$. We detail several methods for implementing this channel and, using known purity-estimation algorithms from the literature, use it to propose an alternative algorithm for estimating the $\alpha$-SRE of an unknown quantum state. When applying this channel via the unitary $cU_\mathcal{P}$, a positively correlated relationship between non-stabilizerness and entanglement was shown to be present. \ben{Lastly, the notion of resource hiding was discussed, where non-stabilizerness was stored non-locally}. 

\ben{Whilst the copy complexity of our algorithm outperforms quantum state tomography for even $\alpha$, and is optimal for odd $\alpha$ \cite{bittel2025operationalinterpretationstabilizerentropy}, there exists more efficient algorithms for estimating $A_{\alpha}(\ket{\psi})$ for even $\alpha$.} However, our algorithm introduces the idea of estimating resource monotones via a purity encoding. It would be interesting to see whether this method can be utilised to estimate other resource monotones, in both the resource theory of non-stabilizerness and beyond. Although, resource monotones often involve a minimisation over the set of free states. In fact, the absence of such a minimisation is partly what makes the SRE's powerful resource measures. It is unlikely to be possible to estimate resource measures with such a minimisation via purity. 

Moreover, within the coherent state preparation of $\mathcal{E}^{\alpha}(\dyad{\psi}^{\otimes \alpha})$, there was shown to be a relationship between two prominent resources, namely, non-stabilizerness and entanglement. Given the state $\ket{\psi'_{\alpha}}_{A\Tilde{B}}$, it was demonstrated that the entanglement across the $A \vert \Tilde{B}$ bipartition is related to the amount of non-stabilizerness in the initial state: the more non-stabilizerness in the initial state, the more entanglement across the bipartition, and vice-versa. Given the purity encoding of the resource measure this may seem surprising: the purity of a reduced state of a bipartite pure state is known to be inverse to the entanglement across that bipartition --- a pure state means there is no entanglement across the bipartition, a maximally mixed state means there is maximal entanglement across the bipartition. However, this apparent inconsistency arises due to the natural logarithm in the definition of $M_{\alpha}(\ket{\psi})$. If the purity of the reduced state is lower, the value of $A_{\alpha}(\ket{\psi})$ is therefore lower. Hence, $M_{\alpha}(\ket{\psi})$ is larger. Therefore, Result~\ref{result:magic_entnaglment_trade_off} is essentially the well-known purity entanglement trade-off present in bipartite pure states, but with the definition of $M_{\alpha}(\ket{\psi})$ meaning the non-stabilizerness and entanglement are instead positivity correlated. Interestingly, anytime a purity encoding of a resource monotone can be achieved by acting a unitary on pure states, that is, via a typical quantum circuit, a similar resource/entanglement trade-off is likely to be found. This could therefore represent a general method for finding relationships between general resources and entanglement. 

\begin{acknowledgments}

The author thanks Paul Skrzypczyk and Chung-Yun Hsieh for helpful discussions, Lorenzo Leone
for insightful comments and encouragement, and Jake Xuereb for useful comments on the manuscript. B.S. acknowledges support from UK EPSRC (EP/SO23607/1). 
\end{acknowledgments}

\bibliography{mainTextBib}

\onecolumngrid
\appendix

\section{Characteristic Distribution}  
\label{SupplementaryMaterialA}

For all pure states, $\rho = \dyad{\psi}$, the characteristic distribution can be defined as 
\begin{align}
    \mathcal{D}(\rho) &\coloneq \big\{ d^{-1} \bra{\psi} P_j \ket{\psi}^2 \},
\end{align}
where $\mathcal{P}_n = \{ P_j \}_{j=0}^{d^2-1}$. Firstly, note that $d^{-1} \bra{\psi} P_j \ket{\psi}^2$ is real as $P_j^\dagger = P_j$. Moreover, $ d^{-1} \bra{\psi} P_j \ket{\psi}^2 \geq 0~\forall~j$ as the square removes any negative values and $d > 0$. This distribution can therefore be interpreted as a probability distribution given that the sum of the elements is equal to one, which can be seen as
\begin{equation}
    \begin{split}
        d^{-1} \sum_{j=0}^{d^2-1} \bra{\psi} P_j \ket{\psi}^{2} &= d^{-1} ~ \sum_{j=0}^{d^2-1} \bra{\psi} P_j \dyad{\psi} P_j \ket{\psi} \\
        &= d^{-1} d^{2} ~ \bra{\psi} \bigg( d^{-2} \sum_{j=0}^{d^2-1} P_j \dyad{\psi} P_j \bigg) \ket{\psi} \\
        &= d \bra{\psi} \mathcal{T} (\dyad{\psi}) \ket{\psi} \\
        &= d \bra{\psi} \frac{\mathbb{I}}{d} \ket{\psi} \\
        &= 1,
    \end{split}
\end{equation}
where 
\begin{equation}
    \mathcal{T}(\cdot) = d^{-2} \sum_{j=0}^{d^2-1} P_j \big( \cdot \big) P_j
\end{equation} 
is the Pauli-twirling channel, which acts as a state-preparation channel of the maximally mixed state. 

\section{Stabilizer Rényi Entropy From Purity}\label{SupplementaryMaterialB}

This appendix contain proofs of how $A_{\alpha}(\ket{\psi})$ can be encoded into the purity of various quantum states.  

\subsection{$A_{\alpha}(\ket{\psi})$ from the purity of $\mathcal{E}_{\mathcal{P}}(\dyad{\psi}^{\otimes \alpha})$'s output} \label{SupplementaryMaterialB.1}

Here, we provide the proof of how $A_{\alpha}(\ket{\psi})$ is encoded into the purity of the state
\begin{equation}
    \begin{split}
            \mathcal{E}_{\mathcal{P}} \big( \dyad{\psi}^{\otimes \alpha} \big) &= d^{-2} \sum_{j=0}^{d^2-1} \big( P_j \dyad{\psi} P_j \big)^{\otimes \alpha}.
    \end{split}
\end{equation}
The purity of this state is given by
\begin{equation}
    \begin{split}
        \textrm{tr} \big[ \mathcal{E}_{\mathcal{P}} \big( \dyad{\psi}^{\otimes \alpha} \big)^2 \big] &= d^{-4}\sum_{j,k=0}^{d^2-1} \textrm{tr}\big[ P_j \dyad{\psi} P_j P_k \dyad{\psi} P_k \big]^{\alpha} \\
        &= d^{-4} \sum_{j,k=0}^{d^2-1} \big( \bra{\psi} P_j P_k \dyad{\psi} P_k P_j \ket{\psi} \big)^{\alpha}. \label{ap:mid_a_in_e}
    \end{split}
\end{equation}
It is then noted that
\begin{equation}
    P_j P_k = (i)^{\theta(j,k)}P_{l(j,k)}, ~~ P_kP_j = (-i)^{\theta(k,j)}P_{l(k,j)}, \label{PauliCommuting}
\end{equation}
where $P_{l(j,k)}=P_{l(k,j)}$ is a Pauli-string that depends on the input Pauli-strings, $P_j$ and $P_k$, and $\theta(j,k)=\theta(k,j)$ is a number that depends on a) the number and location of identities in the Pauli-strings $P_j$ and $P_k$, and b) the number of qubits for which the Pauli-strings apply the same Pauli operator. Using Eq.~(\ref{PauliCommuting}) allows the output Pauli-string from the multiplication to be separated from the resulting phase factor. 

Returning to Eq.~\eqref{ap:mid_a_in_e} and inputting Eq.~\eqref{PauliCommuting}, it can be seen that
\begin{equation}
    \begin{split}
        \textrm{tr} \big[ \mathcal{E}_{\mathcal{P}} \big( \dyad{\psi}^{\otimes \alpha} \big)^2 \big] &= d^{-4} \sum_{j,k=0}^{d^2-1} \big( \bra{\psi} (i)^{\theta(j,k)}P_{l(j,k)} \dyad{\psi} (-i)^{\theta(j,k)}P_{l(j,k)} \ket{\psi} \big)^{\alpha} \\
        &= d^{-4} \sum_{j,k=0}^{d^2-1} (i \times -i)^{\theta(j,k) \alpha} \big( \bra{\psi} P_{l(j,k)} \dyad{\psi}P_{l(j,k)} \ket{\psi} \big)^{\alpha} \\
        &= d^{-4} \sum_{j,k=0}^{d^2-1} \bra{\psi} P_{l(j,k)} \ket{\psi} ^{2\alpha}, \label{ap:mid_a_in_e_2}
    \end{split}
\end{equation}
meaning any phase term is irrelevant. By now rewriting $P_{l(j,k)} = P_jP_k$ (and ignoring the phase term as it has already been accounted for), it can be seen that 
\begin{equation}
    \begin{aligned}
        \sum_{j=0}^{d^2-1} P_{l(j,k)} &= \sum_{j=0}^{d^2-1}  P_jP_k \\
                                    &= \sum_{j'=0}^{d^2-1}  P_{j'}~, \label{eq:sum_over_pauli}
    \end{aligned}
\end{equation}
as multiplying each element of the set of Pauli-strings by a given Pauli-string ($P_k$ in the above) returns the set of Pauli-strings (when phase terms are ignored). Eq.~\eqref{ap:mid_a_in_e_2} therefore becomes 
\begin{equation}
    \begin{split}
      \textrm{tr} \big[ \mathcal{E}_{\mathcal{P}} \big( \dyad{\psi}^{\otimes \alpha} \big)^2 \big] &= d^{-4} d^2  \sum_{l=0}^{d^2-1} \bra{\psi} P_l \ket{\psi}^{2 \alpha} \\
      &= d^{-1} A_{\alpha}(\ket{\psi}),
    \end{split}
\end{equation}
completing the proof. 

\subsection{$A_{\alpha}(\ket{\psi})$ from Purity of Ancilla's}\label{SupplementaryMaterialB.2}

Here, we provide the proof of how $A_{\alpha}(\ket{\psi})$ is encoded into the purity of the ancillary qubits in Alg.~\ref{alg:fullAlgorithm} by direct calculation. 

Initially, $\alpha$ copies of $ \ket{\psi} $ are prepared in a register, before appending $2n$ ancillary qubits prepared in the computational basis state, leaving the complete register in the state
\begin{equation}
\begin{aligned}
    \ket{\psi_{\alpha}}_{A \Tilde{B}} &= \ket{0}^{\otimes 2n}_A \otimes \ket{\psi}^{\otimes \alpha}_{\Tilde{B}}, \\
\end{aligned}
\end{equation}
where $A$ labels the ancillary register whilst $\Tilde{B} = B_0B_1 \ldots B_\alpha$ labels the registers holding the $\alpha$ copies of $\ket{\psi}$. 

Hadamard gates are then applied to the ancillary qubits to put them in an equal superposition of the computational basis states, denoted by $\{ \ket{j} \}_{j=0}^{d^2-1}$, giving the following state,
\begin{equation}
    \begin{aligned}
        \ket{\Tilde{\psi}_{\alpha}}_{A \Tilde{B}} &= \big( H^{\otimes 2n}_A \otimes \mathbb{I}_{\Tilde{B}} \big) \ket{\psi_{\alpha}}_{A \Tilde{B}}  \\
        &= d^{-1}  \sum_{j=0}^{d^2-1} \ket{j}_A \otimes \ket{\psi}^{\otimes \alpha}_{\Tilde{B}}.
    \end{aligned}
\end{equation}
Next, the unitary is applied, 
\begin{equation}
    cU_{\mathcal{P}} = \sum_{i=0}^{d^2-1} \dyad{i} \otimes P_i^{\otimes \alpha},
\end{equation}
where $P_i~\in~\mathcal{P}_{n}~\forall~i$. The unitary $ cU_{\mathcal{P}}$ applies the Pauli-string $P_i$ on each copy of $\ket{\psi}$ if the ancillary qubits are in the state $\ket{i}$. Applying $ cU_{\mathcal{P}}$ gives, 
\begin{equation}
    \begin{aligned}
        \ket{\psi'_{\alpha}}_{A \Tilde{B}} &= cU_{\mathcal{P}} \ket{\Tilde{\psi}_{\alpha}}_{A \Tilde{B}} \\
        &= \bigg( \sum_{i=0}^{d^2-1} \dyad{i} \otimes P_i^{\otimes \alpha} \bigg) \biggl(d^{-1} \sum_{j=0}^{d^2-1} \ket{j}_A \otimes \psi^{\otimes \alpha}_{\Tilde{B}} \biggl) \\
        &= d^{-1} \sum_{i=0}^{d^2-1} \ket{i}_A \otimes (P_i \ket{\psi})_{\Tilde{B}}^{\otimes \alpha}.
    \end{aligned}
\end{equation}
If system $A$ is traced out, one gets 
\begin{equation}
\begin{aligned}
    \Tilde{\psi}_{\alpha, \Tilde{B}} &= \textrm{tr}_{A} \big[ cU_{\mathcal{P}} \big(H^{\otimes 2n} \otimes \mathbb{I} \big) \ket{\psi_{\alpha}}_{A\Tilde{B}} \big], \\
    &= \mathcal{E}^{\alpha}_{\mathcal{P}} \big( \dyad{\psi}^{\otimes \alpha} \big),
\end{aligned}
\end{equation}
meaning that $A_{\alpha}(\ket{\psi})$ is encoded into the purity the state in $\Tilde{B}$, as stated in the main text. If instead system $\Tilde{B}$ is traced out, one gets the following state  
\begin{equation}
    \begin{aligned}
        \psi^{'}_{\alpha, A}  &= \textrm{tr}_{\Tilde{B}} \bigg[ \ket{\psi'_{\alpha}} \bra{\psi'_{\alpha}}_{A \Tilde{B}}  \bigg]  \\ 
        &= d^{-2} \textrm{tr}_{\Tilde{B}} \Biggl[ \bigg(  \sum_{i=0}^{d^2-1} \ket{i}_A \otimes P_i \ket{\psi}^{\otimes \alpha}_{\Tilde{B}} \bigg) \bigg( \sum_{j=0}^{d^2-1} \bra{j}_A \otimes \bra{\psi}_{\Tilde{B}}^{\otimes \alpha} P_j^{\dagger} \bigg)  \Biggl] \\
        &= d^{-2} ~\textrm{tr}_{\Tilde{B}} \Biggl[ \sum_{i,j=0}^{d^2-1} \ket{i}\bra{j}_A \otimes (P_i \dyad{\psi} P_j)_{\Tilde{B}}^{\otimes \alpha} \Bigg]  \\
        &= d^{-2}~\sum_{i,j=0}^{d^2-1} \textrm{tr} \big[P_j P_i \dyad{\psi} \big]^{\alpha} \ket{i}\bra{j}_A,
    \end{aligned}
\end{equation}
where we have used the fact that the Pauli-strings are Hermitian. The purity of $\psi'_{\alpha, A}$ is then given by
\begin{equation}
    \begin{aligned}
        \textrm{tr} \big[ (\psi^{'}_{\alpha, A})^2 \big] &= d^{-4} \textrm{tr} \Biggl[ \biggl( ~\sum_{i,j=0}^{d^2-1} \textrm{tr} \big[P_j P_i \dyad{\psi} \big]^{\alpha} \ket{i}\bra{j}_A \biggl) \biggl( \sum_{k,l=0}^{d^2-1} \textrm{tr} \big[P_k P_l \dyad{\psi} \big]^{\alpha} \ket{l}\bra{k}_A \biggl)  \Biggl] \\
        &= d^{-4} ~  \sum_{i,j=0}^{d^2-1} ~ \sum_{k,l=0}^{d^2-1} \textrm{tr} \big[P_j P_i \dyad{\psi} \big]^{\alpha}  \textrm{tr} \big[P_k P_l \dyad{\psi} \big]^{\alpha} \delta_{jl} \delta_{ik} \\
        &= d^{-4} ~  \sum_{k,j=0}^{d^2-1}  \textrm{tr} \big[P_j P_k \dyad{\psi} \big]^{\alpha}  \textrm{tr} \big[P_k P_j \dyad{\psi} \big]^{\alpha} \label{swapTestOuput}
    \end{aligned}
\end{equation}

Here, using the same method as in Appendix~\ref{SupplementaryMaterialB.1} (inputting Eq.~\ref{PauliCommuting} and Eq.~\ref{eq:sum_over_pauli} into Eq.~\ref{swapTestOuput}), gives  
\begin{equation}
    \begin{aligned}
           \textrm{tr} \big[ (\psi^{'}_{\alpha, A})^2 \big] &=
        d^{-4} ~  \sum_{k,j=0}^{d^2-1}  \textrm{tr} \big[(i)^{\theta(j,k)}P_{l(j,k)} \dyad{\psi} \big]^{\alpha}  \textrm{tr} \big[(-i)^{\theta(j,k)}P_{l(j,k)} \dyad{\psi} \big]^{\alpha} \\
        &=  d^{-4} ~  \sum_{k,j=0}^{d^2-1}   (i \times -i)^{\theta(j,k) \times \alpha} ~ \textrm{tr}\big[P_{l(j,k)} \dyad{\psi} \big]^{\alpha}  \textrm{tr} \big[P_{l(j,k)} \dyad{\psi} \big]^{\alpha} \\
         &=  d^{-4} ~  \sum_{k,j=0}^{d^2-1}   \bra{\psi} P_{l(j,k)} \ket{\psi}^{2\alpha} \\
         &= d^{-1} A_{\alpha}(\ket{\psi}),
    \end{aligned}
\end{equation}
completing the proof. 

\section{Resource Requirements  \label{SupplementaryMaterialC}}

\subsection{Our Algorithm}

By repeating the swap test $\mathcal{O}(\tau^{-2})$ times, $\gamma = \textrm{tr} \big[ \Paulichannel(\dyad{\psi}^{\otimes \alpha})^2 \big]$ can be measured to an additive error of $\tau$. Each swap test uses two copies of $\Paulichannel(\dyad{\psi}^{\otimes \alpha})$, and each copy of $\Paulichannel(\dyad{\psi}^{\otimes \alpha})$ uses $\alpha$ copies of $ \ket{\psi} $. Hence, $\mathcal{O}( \alpha \tau^{-2} ) $ copies of $ \ket{\psi} $ are needed to measure $\gamma$ to additive error $\tau$, which can be shown using Chebyshev's inequality. To measure $A_{\alpha}(\ket{\psi})$ to additive error $\epsilon$ then requires $\epsilon = d \tau $, meaning $\mathcal{O}(\alpha d^2 \epsilon^{-2})$ copies of $\ket{\psi}$ are needed in total. 

\subsection{Quantum State Tomography}
Here, the calculation of the resource requirements for measuring $A_{\alpha}(\ket{\psi})$ using quantum state tomography (QST) are detailed. Let $\rho$ be a state and $O$ an observable such that one is aiming to find
\begin{equation}
    \expval{O}_{\rho} = \textrm{tr}\big[ \rho O \big].
\end{equation}
To do this via QST, one first constructs an estimate of the state, $\rho_{\rm est}$, and then calculates 
\begin{equation}
    \expval{O}_{\rm est} = \textrm{tr}\big[ \rho_{\rm est} O \big],
\end{equation}
via classical processing. The difference between these two values is then given by 
\begin{equation}
    \begin{split}
        \Delta \expval{O} &= \expval{O}_{\rho} - \expval{O}_{\rm est} \\
        &= \textrm{tr}\big[ \Delta \rho O \big],
    \end{split}
\end{equation}
where $\Delta \rho = \rho - \rho_{\rm est}$. It can then be seen that 
\begin{equation}
    \begin{split}
        \vert  ~ \Delta \expval{O} \vert \leq \vert \vert \Delta \rho \vert \vert_1 ~ \vert \vert O \vert \vert_{\infty},
    \end{split}
\end{equation}
by using Hölders inequality,
\begin{equation}
    \vert ~ \textrm{tr}\big[ A^\dagger B \big] ~ \vert \leq \vert \vert A \vert \vert_{p}~\vert \vert B \vert \vert_{q},
\end{equation}
where $p^{-1} + q^{-1} = 1$ and $p,q \geq 0$. Here we have considered $p=1$ and $q=\infty$. We now define
\begin{equation}
    \begin{split}
         \omega &\coloneq \vert \vert \Delta \rho \vert \vert_1, \\
         \epsilon &\coloneq \vert \vert \Delta \rho \vert \vert_1 ~ \vert \vert O \vert \vert_{\infty},
    \end{split}
\end{equation}
which are the trace norm error and the additive error respectively. The first quantifies the quality of the state that is reconstructed from performing QST; the second quantifies how far $\expval{O}_{\rm est}$ is from $\expval{O}$ and is the error considered in the main text. Now, to perform quantum state tomography to a trace norm error $\tau$ requires $\Theta(d\omega^{-2})$ copies of $\rho$. By noting that $\omega = \epsilon/ \vert \vert O \vert \vert_\infty$, it can be seen that in terms of the additive error, $\epsilon$, one requires $O(d \vert \vert O \vert \vert_\infty ^2 \epsilon^{-2})$ copies of $\rho$.

\subsection{Direct Estimation}

\subsubsection{Direct Estimation of $\Gamma_{\alpha}^{\otimes n}$}

Here, the aim is to estimate $\expval{\Gamma_{\alpha}^{\otimes n}}_{\psi^{2\alpha}} = \bra{\psi}^{\otimes 2 \alpha} \Gamma_{\alpha}^{\otimes n} \ket{\psi}^{\otimes 2\alpha}$ to additive error $\epsilon$ by individually estimating the expectation value of each of the $d$ Pauli-strings in $\Gamma_{\alpha}^{\otimes n}$, and then combining them via classical post-processing. 

For each Pauli-string $P_j$ in $ \Gamma_{\alpha}^{\otimes n}$, we assume $k$ copies of $\ket{\psi}^{\otimes 2 \alpha}$ are used to estimate \hbox{$\expval{P_j}_{\psi^{2\alpha}} = \bra{\psi}^{\otimes 2 \alpha} P_j \ket{\psi}^{\otimes 2\alpha}$}. Each copy is measured in a basis corresponding to $P_j$, giving the set of measurement outcomes $\{ x^j_l \}_{l=0}^{k-1}$. Each possible measurement has a standard deviation of $\sigma_{j}$. An unbiased estimator of $\expval{P_j}_{\psi^{2 \alpha}}$ is then given by 
\begin{equation}
    O_{P_j} = k^{-1} \sum^{k-1}_{l=0} x^j_l.
\end{equation}
This is repeated for all $d^2$ Pauli-strings in $\Gamma_{\alpha}^{\otimes n}$, giving an unbiased estimator of $\expval{\Gamma_{\alpha}^{\otimes n}}_{\psi^{2 \alpha}}$ as 
\begin{equation}
    O_{\Gamma} = d^{-1} k^{-1} \sum_{j=0}^{d^2-1} \sum_{l=0}^{k-1} x^j_l.
\end{equation}
The standard deviation of the complete set of measurements, $\sigma_{\Gamma}$, is then
\begin{equation}
    \begin{split}
        \sigma_{\Gamma}^2 &= d^{-2} k^{-2} \sum_{j=0}^{d^2-1} \sum_{l=0}^{k-1} \sigma_{j}^2 \\
        &= d^{-2} k^{-1} \sum_{j=0}^{d^2-1} \sigma_{j}^2. 
    \end{split}
\end{equation}
As $\sigma_j^2 \leq 1 ~\forall~j$ in $\Gamma_{\alpha}^{\otimes n}$, the above expression can be upper-bounded as $\sigma_{\Gamma}^2 \leq k^{-1}$. Using Chebyshev's inequality one can then see that 
\begin{equation}
    {\rm Pr} \bigg[ \big\vert  O_{\Gamma} - \expval{\Gamma_{\alpha}^{\otimes n}}_{\psi^{2 \alpha}} \big\vert \geq \epsilon \bigg] \leq \frac{\sigma_{\Gamma}^2}{\epsilon^2} \leq \frac{1}{k \epsilon^2}.
\end{equation}
Hence, one needs to use $\mathcal{O}(\epsilon^{-2})$ copies of $\ket{\psi}^{\otimes 2 \alpha}$ to estimate the expectation values of each $P_j$ in $\Gamma_{\alpha}^{\otimes n}$ such that $ O_{\Gamma}$ estimates $\expval{\Gamma_{\alpha}^{\otimes n}}_{\psi^\alpha}$ to additive error $\epsilon$. Given there are $d^2$ Pauli-strings in  $\Gamma_{\alpha}^{\otimes n}$, this gives a total of $\mathcal{O}(\alpha d^{2} \epsilon^{-2})$ copies of $\ket{\psi}$ needed. 

\subsubsection{Direct Estimation of Each Single-Copy Pauli-String}

Here, the aim is to estimate $A_{\alpha}(\ket{\psi})$ to additive error $\epsilon$ by individually estimating the expectation value of each Pauli-string $\expval{P_j}_{\psi} = \bra{\psi} P_j \ket{\psi}$ for all $P_j \in \mathcal{P}_n$, and then combining them via classical post-processing. This allows $A_{\alpha}(\ket{\psi})$ to be measured using single-copy measurements. 

Assume that the expectation value of each Pauli-string has been measured to an additive error of at most $\tau$, such that 
\begin{equation}
    \big\vert O_{P_j} - \expval{P_j}_{\psi} \big\vert \leq \tau ~ \forall ~ P_j \in \mathcal{P}_n,
\end{equation}
where $O_{P_j}$ is the estimate for the expectation value $ \expval{P_j}_{\psi}$. An estimate for $A_{\alpha}(\ket{\psi})$ can then be found as  
\begin{equation}
    \Tilde{A}_{\alpha} = d^{-1} \sum_{j=0}^{d^2-1} \big(O_{P_j} \big)^{2 \alpha}.
\end{equation}
To second order in $\tau$, the maximum a given $(O_{P_j})^{2 \alpha}$ differs from $ \expval{P_j}_{\psi}^{2 \alpha}$ is 
\begin{equation}
    \big\vert \Tilde{P}_j - \expval{P_j}_{\psi} \big\vert^{2 \alpha} \leq  2 \alpha (O_{P_j})^{2 \alpha -1} \tau + \mathcal{O}(\tau^2). 
\end{equation}
Therefore, again to second order in $\tau$, the maximum $\Tilde{A}_{\alpha}$ differs from $A_{\alpha}(\ket{\psi})$ is 
\begin{equation}
    \begin{split}
        \big\vert \Tilde{A}_{\alpha} - A_{\alpha}(\ket{\psi}) \big\vert &\leq \bigg\vert 2 \alpha d^{-1} \tau \sum_{j=0}^{d^2-1} \big( O_{P_j} \big)^{2 \alpha-1} \bigg\vert + \mathcal{O}(\tau^2) \\
        &\leq 2 \alpha d^{-1} \tau \sum_{j=0}^{d^2-1} \big\vert O_{P_j} \big\vert^{2 \alpha - 1} + \mathcal{O}(\tau^2) \\
        &\leq 2 \alpha d \tau , 
    \end{split}
\end{equation}
where the final line comes from the fact that $\vert O_{P_j} \vert \leq 1 ~\forall~P_j \in \mathcal{P}_n$. To ensure $A_{\alpha}(\ket{\psi})$ is measured to additive error $\epsilon$ one therefore needs $\tau \leq \epsilon/(2 \alpha d)$. 

With this, Chebyshev's inequality can be used to concluded that one needs $\mathcal{O}(\alpha^2 d^2 \epsilon^{-2})$ copies of $\ket{\psi}$ per Pauli-string to estimate $A_{\alpha}(\ket{\psi})$ to an addative error $\epsilon$. Hence, 
$\mathcal{O}(\alpha^2 d^4 \epsilon^{-2})$ copies of $\ket{\psi}$ are needed in total given there are $d^2$ Pauli-strings. 

\end{document}